\newcommand{\rdots}{\mathinner{%
  \mkern1mu\raise1pt\hbox{.}%
  \mkern2mu\raise4pt\hbox{.}%
  \mkern2mu\raise7pt\vbox{\kern7pt\hbox{.}}\mkern1mu}}
\newtheorem{theorem}{Theorem}[section]
\newtheorem{corol}[theorem]{Corollary}
\newtheorem{proposition}[theorem]{Proposition}
\newcommand{\histogram}{mountain}
\newcommand{\Histogram}{Mountain}
\newcommand{\pslg}{PSLG}
\newcommand{\Pslg}{PSLG}
\newcommand{\eqdef}{:=}
\newcommand{\etal}{et~al.}
\newcommand{\marrow}{\marginpar[\hfill$\longrightarrow$]{$\longleftarrow$}}
\newcommand{\niceremark}[4]
   {\textcolor{#4}{\textsc{#1 #2:} \marrow\textsf{#3}}}
\newcommand{\wolfgang}[2][says]{\niceremark{Wolfgang}{#1}{#2}{red}}
\begin{document}

\title{Memory-Constrained Algorithms for Simple Polygons\thanks{This work was initiated at the 
 Dagstuhl Workshop on
Memory-Constrained Algorithms and Applications,
 November 21--23, 2011. We are deeply grateful
to the organizers as well as the participants of the workshop for
helpful discussions during the meeting.
The results were presented
at the
28th European Workshop on Computational Geometry (EuroCG'12), in Assisi, Italy, March 2012~\cite{abbkmrs-mcasp-12}.}}

\author{
Tetsuo Asano\thanks{JAIST, Japan, \texttt{t-asano@jaist.ac.jp}} \and
Kevin Buchin\thanks{TU Eindhoven, Netherlands, \{\texttt{k.a.buchin, m.e.buchin}\}\texttt{@tue.nl } M. Buchin is supported by the Netherlands Organisation for Scientific Research
(NWO) under project no. 612.001.106. } \and
Maike Buchin\footnotemark[3] \and
Matias Korman\thanks{UPC, Barcelona. {\tt{matias.korman@upc.edu}}. With the support of the Secretary for Universities and
Research of the Ministry of Economy and Knowledge of the Government of Catalonia and the European Union.} \and
Wolfgang Mulzer\thanks{Freie Universit\"at Berlin,
 Germany, \{\texttt{mulzer, rote}\}\texttt{@inf.fu-berlin.de } } \and
G\"unter Rote\footnotemark[5] \and
Andr\'e Schulz\thanks{WWU M\"unster, Germany, \texttt{andre.schulz@uni-muenster.de} }
}

\maketitle

\begin{abstract}
A constant-work-space algorithm has read-only access
to an input array and may use only $O(1)$ additional
words of $O(\log n)$ bits, where $n$ is the input size.
We show how to triangulate a
plane straight-line graph with $n$ vertices in $O(n^2)$ time and constant work-space.
We also consider the problem of preprocessing a simple
polygon $P$ for shortest path queries, where $P$ is given by
the ordered sequence of its $n$ vertices.
For this, we relax the space constraint to allow $s$~words of work-space.
After quadratic preprocessing, the shortest path
between any two points inside $P$ can be found in $O(n^2/s)$ time.
\end{abstract}

\section{Introduction}
\label{sec:intro}

In algorithm development and computer technology, we observe
two opposing trends: on the one hand, there are vast amounts
of computational resources at our fingertips.
Alas, this often leads to bloated software that is written without regard to
resources and efficiency.
On the other hand, we see a proliferation of specialized tiny  devices
that have a limited supply of memory
or power. Software that is oblivious to space
resources is not suitable for such a setting.
Moreover, even if a small device features a fairly large memory, it may still
be preferable to limit the number of write operations.
For one, writing to flash memory is
slow and costly, and it also reduces the lifetime of the memory.
Furthermore, if the input is stored on a removable medium, write-access
may be limited for technical or security reasons.

With this situation in mind, it makes sense to focus on
algorithms that need only a limited amount
of work-space, while the input resides in read-only memory.
In this paper, we will develop such algorithms for geometric
problems in planar polygons or, more generally, plane straight-line graphs
(PSLGs).

In particular, we consider two fundamental problems from computational
geometry~\cite{deBergChvKrOv08}:
first, we are given a \pslg\ $G = (V,E)$ with $n$ vertices, and
we would like to find a \emph{triangulation} for $G$, i.e., a
\pslg\ with vertex set $V$ that contains all the edges in $E$ and
to which no edge can be added without violating planarity.
We show how to find such a triangulation in $O(n^2)$ time
with $O(1)$ words of work-space (Section~\ref{sec:polygon}).
Since our model does not allow storing the output, our algorithm outputs
the triangles of the triangulation one after another.

Then,  we apply this result in order to construct
a \emph{memory-adjustable} data structure for shortest path queries
in simple polygons (Section~\ref{sec:memadj}).
Given a simple polygon $P$ with $n$ vertices and a parameter
$s \in \{1, \ldots, n\}$, we  build a data structure for $P$ that
requires $O(s)$ words of storage and that lets us output the
edges of a shortest path between any two points inside $P$ in
$O(n^2/s)$ time using $O(s)$ work-space.
The preprocessing time is $O(n^2)$.

\paragraph{Model Assumptions.}\label{sec:def}

The input to our algorithms is either a simple polygon $P$ or a \pslg\
$G$
with $n$ vertices, stored in a read-only data structure.\footnote{A
\emph{plane straight-line graph} (PSLG) consists of a planar point set $V$
(vertices) and a set $E$ of non-crossing line segments
 with endpoints in $V$ (edges). By planarity, we have $|E| = O(|V|)$.}
In case of a \pslg, we assume that $G$ is given in a way
that allows us to enumerate all edges of $G$ in $O(n)$ time and to find the
incident vertices of a given edge in constant time (a standard adjacency
list representation will do).
In case of a polygon, we require that the vertices of $P$ are stored according
to their counterclockwise order along the boundary, so
that we can obtain the (clockwise and counterclockwise) neighbor
of any given vertex in constant time.
We also assume that is takes constant time to access the $x$-
and $y$-coordinates of any  vertex
and to perform basic geometric operations, such as determining
whether a point lies above or below a given line.

Storage is counted in terms of \emph{cells} or \emph{words}.
As usual, a word is assumed to be large enough to contain either
an input item (such as a point coordinate)
or a pointer into the input structure (of $\Theta(\log n)$ bits).
Thus, in order to convert our storage bounds into
bits, we have to multiply them by a factor of $\log n$.
In addition to the input, which can only be read, the algorithm has
$O(s)$ words of \emph{work-space} at its disposal for reading and writing.
Here, $s$ is a parameter of the model and can range between $1$ and $n$.
We will consider both the case where $s$ is a fixed constant
and the case where $s$ can be chosen by the user. Since there is no way
to store the result, we use an additional operation \texttt{output} in order
to generate output data. We require that every feature
of the desired structure is \texttt{output} exactly once.

For simplicity, we will make the usual general position assumption:
no three input vertices are on a line and no two input vertices have
the same $x$-coordinate.

\paragraph{Related Work.}
Given the many applications of memory-constrained algorithms, a
significant amount of research has been devoted to them,
even as early as in the 1980s~\cite{MunroPa80}.
One of the most studied problems in this setting is that of
selection in an unsorted array with elements from a totally ordered
universe~\cite{MunroRa96,MunroPa80,Frederickson87,RamanRa98,Chan10}.

In computational complexity theory, the constant-work-space model
is represented by the
complexity class LOGSPACE~\cite{AroraBa09}. There are several algorithmic
results for this class, most prominently Reingold's celebrated method for
finding a path between two vertices in an undirected
graph~\cite{Reingold08}. However, complexity theorists typically do not
try to optimize the running time of their constant-work-space algorithms,
whereas one of our objectives is to solve a given problem as quickly as
possible under the memory constraint.

There are other models that allow only read-access to the input, such as
the \emph{streaming} model~\cite{Muthukrishnan05} or the
\emph{multi-pass} model~\cite{ChanCh07}.
In these models, the input can be read only a bounded number of times
in sequential order, whereas we allow the input to be accessed in any order and
as often as necessary.
Other memory-constrained models are \emph{succinct}
data structures~\cite{JacoSucc89} and \emph{in-place}
algorithms~\cite{BronnimannIaKaMoMoTo04,BronnimannChCh04,ChanCh10,BronnimannCh06}.
The aim of succinct data structures is to use the minimum number of
bits to represent a given input. Although this kind of approach
significantly reduces the memory requirement,
in many cases $\Omega(n)$ bits are still necessary.
For in-place algorithms, we also assume that only $O(1)$ cells of work-space
are available. However, in this model we are  allowed to reorder and sometimes
rewrite the input data. This makes it possible to encode our data
structures through appropriate permutations of the input and often
to achieve the best possible running time. Several classic geometric problems,
such as convex hull computation or nearest-neighbor search, have
been considered in this
model~\cite{BronnimannIaKaMoMoTo04,BronnimannChCh04,ChanCh10,BronnimannCh06}.
Note that the improved running times in the in-place model come at
the expense of requiring more powerful operations from the computational
environment, making the results less widely applicable.
Moreover, in most of
these cases the input values cannot be recovered after the algorithm
is executed.

A classic algorithm from computational geometry that fits into
our model is the gift-wrapping method
(also known as Jarvis' march):
given $n$ points in the plane, we can report the $h$ points
on the convex hull in $O(n{h})$ time using $O(1)$ cells of
work-space~\cite{s-chc-04}.
More recently, Asano \etal~\cite{AsanoMuRoWa11} initiated
the systematic study of constant-work-space algorithms
in a geometric context.  They
describe algorithms to compute well-known geometric structures (such as the
Delaunay triangulation, the Voronoi diagram, and the Euclidean
minimum spanning tree) using $O(1)$ cells of work-space.
They also show how to obtain a triangulation of a planar point
set and how to find the edges of the shortest path between any two points
inside a simple polygon. These algorithms use constant work-space
and run in quadratic time.
We emphasize once again
that since the output may have linear size, it is not stored,
but reported piece by piece.
Recently, Barba \etal~\cite{BarKorLanSil11}
gave a constant-work-space algorithm
for computing the visibility
region of a point inside a simple polygon.

Although we know of no previous method to triangulate a given
simple polygon $P$ with sublinear work-space, it is known how
to find an \emph{ear} of $P$ in linear time and constant
space~\cite{ElGindyEvTo93} (an ear is a triangle inside $P$
defined by a single line segment between two vertices of $P$).
However, due to the limited work-space, there seems to be no easy way
to extend this method in order to obtain a complete triangulation of $P$.

\section{Preliminaries}
\label{sec:primitive}

We start by describing two simple operations and a basic spatial
decomposition technique that will be useful in designing our algorithms

\paragraph{Angular and Translational Sweep.} Let $s$ be a point in the
plane, and let $r$ be a ray with initial point $t\in r$ such that the line
segment $st$ does not intersect any edge of the input \pslg\ $G$.
In  an  \emph{angular sweep}, we move $t$ along $r$, and we would like
to determine the first vertex of $G$ that is hit by the line segment
$st$.
Note that the angular sweep is very similar to what is commonly called
a \emph{gift wrapping query}.
A \emph{translational} sweep is similar, but now we have a second ray
$r'$ with initial point $s$, and the segment $st$ is
swept vertically,  while maintaining the endpoints on $r$ and $r'$. Figure~\ref{fig:sweeps} depicts
an example of both sweep types.
In both cases, the first vertex of $G$ that is hit by $st$ can be found in
$O(n)$ time: we only need to check for each vertex whether it lies
in the swept area and to maintain the point of smallest angle or of smallest
horizontal distance, respectively.

\begin{figure}[htbp]
\begin{center}
\begin{tabular}{cp{8mm}c}
\includegraphics[scale=.80]{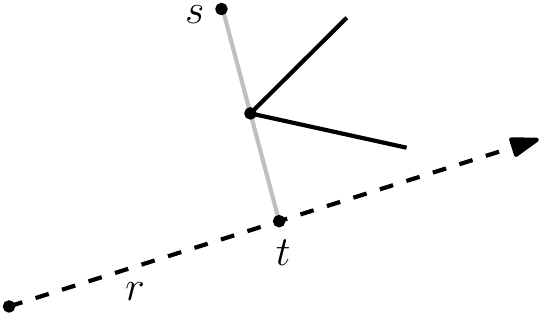} &&
\includegraphics[scale=.80]{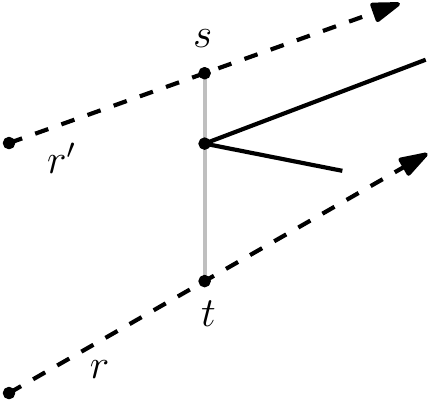} \\
(a) & &(b)
\end{tabular}
\caption{Illustration of an angular sweep (a) and a translational sweep (b).}
\label{fig:sweeps}
\end{center}
\end{figure}


\paragraph{Vertical Decomposition.}
The \emph{vertical
decomposition} (or \emph{trapezoidation}) of $G$ is
obtained by shooting vertical rays upward and downward from each vertex
of $G$ until they hit another edge of $G$ (or extend to
infinity)~\cite{deBergChvKrOv08}.
This gives a decomposition of the plane into trapezoidal cells.\footnote{Some
of these cells may be unbounded or degenerate into triangles.}
In general position, there are at most two vertices of $G$ on the
boundary of each cell.
(If $G$ represents a simple polygon~$P$, we may be interested
only in the trapezoids interior to $P$.)

In our model,
the trapezoids
incident to a given vertex $p$
can be determined easily in $O(n)$
time per trapezoid~\cite{AsanoMuRoWa11,AsanoMuWa11}: first, we enumerate
the edges of $G$ to find the first edge hit by the upward
and the downward vertical ray from $p$ in $O(n)$ time. Then,
we enumerate all edges incident to $p$ in circular order, including the
upward and the downward vertical ray. For each pair of consecutive edges, we perform an appropriate
translational sweep to find the trapezoid that is bounded by them. This takes $O(n)$ time per
edge.


\section{Triangulating a \Histogram{}}
\label{sec:hist}

In the next section, we will present an algorithm for
triangulating an arbitrary \pslg. First, however,
we need to explain how to handle inputs of a special kind.
The algorithm from this section will serve as an important building
block for the general case.

Let $H$ be a simple polygon with vertices
$a_1, a_2, \dots, a_k$, in circular order.
We call $H$ a  \emph{monotone mountain}\footnote{Also known as
\emph{unimonotone polygon}~\cite{FournierMo84}.}
(or \emph{\histogram} for short) if the $x$-coordinates of
$a_1, a_2, \dots$ increase monotonically.
The edge $a_1a_k$ is called the \emph{base} of $H$.
The \emph{shortest path} between two points $s$ and $t$ in $H$ is
the shortest polygonal chain with endpoints $s$ and $t$ that does not
cross the boundary of $H$.
We define the \emph{shortest path tree} SPT as the union of all
shortest paths from $a_1$ to the other vertices of $H$, see
Figure~\ref{fig:SPT}. SPT is a rooted tree with root $a_1$,
and it has the following properties:
\begin{proposition}
\label{prop:spt}
Any two adjacent edges of SPT form a left turn \textup(wrt.\ $a_1$\textup);
i.e., SPT bends only ``upwards''.
Let $f$ be an interior face of the \pslg\ formed by SPT
and $H$. Then (i) $f$ is bounded by the shortest paths
from $a_1$ to two consecutive vertices $a_i$
  and $a_{i+1}$;\footnote{This holds in any simple polygon.}
  (ii) $f$ is a
pseudotriangle\footnote{A pseudotriangle is a polygon with triangular convex hull.}, bounded from below by
 an SPT edge $ua_{i+1}$, from the right by an edge $a_ia_{i+1}$ of $H$,
 and from above by a concave chain
 of SPT edges \textup(as seen from inside\textup);
 and (iii)
 $f$ can be triangulated uniquely by connecting
  the rightmost vertex $a_{i+1}$ with the
  reflex vertices on the upper
  boundary.\qed
\end{proposition}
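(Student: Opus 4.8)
The plan is to establish the three claims in order, each building on the previous one. For the opening statement — that SPT bends only upwards — I would argue as follows. Consider a shortest path from $a_1$ to some vertex $a_j$, and suppose it turns at an intermediate vertex $a_i$. Since $a_i$ lies strictly inside the polygon relative to the segment cutting the corner, the path must bend \emph{at} $a_i$ precisely because $a_i$ is a reflex vertex of $H$ blocking the straight route; in a \histogram{}, every reflex vertex lies on the upper chain (the lower boundary is the single base edge $a_1a_k$), so the obstacle is always ``above'' and the path is pushed upward, forcing a left turn with respect to $a_1$. The same reasoning, applied to the concatenation of two shortest paths to $a_i$ and $a_{i+1}$ through a common prefix, shows that adjacent SPT edges turn consistently to the left.

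For part (i), I would invoke the standard fact (flagged in the footnote) that in any simple polygon the shortest-path tree from a fixed source decomposes the polygon into faces, each bounded by two shortest paths to a pair of vertices consecutive \emph{on the boundary}; since the paths share a maximal common prefix and then diverge, the face between $\pi(a_1,a_i)$ and $\pi(a_1,a_{i+1})$ is well defined. The key structural input here is that the vertices $a_i, a_{i+1}$ are consecutive along $\partial H$, which is immediate from the circular order.

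Part (ii) is where the \histogram{} structure does real work, and I expect this to be the main obstacle: one must check that the face $f$ is actually a pseudotriangle with exactly the claimed three sides. The bottom side is the SPT edge $ua_{i+1}$ where $u$ is the last common vertex of the two paths (or where the two paths first separate); the right side is the polygon edge $a_ia_{i+1}$ — here I would use $x$-monotonicity to argue $a_{i+1}$ is the rightmost point of $f$, so no other polygon edge can lie ``to the right'' of it; and the top side is the portion of $\pi(a_1,a_i)$ from $u$ to $a_i$, which by the first statement is a chain turning consistently left, i.e.\ concave as seen from inside $f$. The subtle points are ruling out that $\pi(a_1,a_{i+1})$ has more than one edge past $u$ (if it did, $f$ would not be bounded below by a single edge — this needs that any bend of $\pi(a_1,a_{i+1})$ past $u$ would be a vertex shared with, or enclosed by, $\pi(a_1,a_i)$, contradicting maximality of the common prefix), and confirming the convex hull of $f$ is a triangle (its three extreme directions are realized at $u$, $a_{i+1}$, and $a_i$). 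I would handle these by a careful case analysis on the turning directions at $u$ and along the upper chain, using $x$-monotonicity throughout.

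Finally, part (iii) follows quickly once (ii) is in hand: a pseudotriangle with one convex corner $a_{i+1}$ opposite a concave chain is triangulated by ``fanning'' from the unique corner that sees the whole chain. Since $a_{i+1}$ is the rightmost vertex and the upper chain is concave (hence $x$-monotone as well, by the first statement), $a_{i+1}$ sees every reflex vertex on the upper boundary, and connecting it to each such vertex cuts $f$ into triangles; uniqueness holds because each reflex vertex on the concave chain \emph{must} be connected to something to its right to remove its reflexivity, and $a_{i+1}$ is the only candidate inside $f$. I would close by noting the edge count matches, confirming it is a full triangulation.
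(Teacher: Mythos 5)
Your argument is essentially correct, but note that the paper itself offers no proof of this proposition at all: it is stated with an immediate \qed, as a package of standard facts about shortest-path trees (the funnel/pseudotriangle structure of SPT faces and the fan triangulation from the apex opposite the concave chain). So the comparison is really between your from-scratch derivation and a citation-by-assertion. Your route --- bends occur only at reflex vertices, all reflex vertices lie on the $x$-monotone upper chain, hence geodesics wrap them from below and turn left; the face between $\pi(a_1,a_i)$ and $\pi(a_1,a_{i+1})$ is then a pseudotriangle fanned from $a_{i+1}$ --- is sound, and you correctly isolate the one genuinely delicate point, namely that $\pi(a_1,a_{i+1})$ has exactly one edge past the branch vertex $u$. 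Your sketch for that point can be closed cleanly: any bend vertex of $\pi(a_1,a_{i+1})$ is some reflex $a_j$ with $j\le i$ (the path stays in the half-plane $x\le x(a_{i+1})$ by monotonicity), and such a vertex would also lie on $\pi(a_1,a_i)$, so the two geodesics would re-meet after diverging, contradicting uniqueness of geodesics in a simple polygon. A streamlining worth knowing: the paper's own Remark at the end of Section~3 records that $\pi(a_1,a_i)$ is exactly the lower convex hull of $a_1,\dots,a_i$; taking that as the definition, the left-turn property, the single edge $ua_{i+1}$ (the standard incremental-hull tangent step), and the concavity of the upper chain of $f$ all fall out at once, which would shorten your case analysis in part (ii) considerably. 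Your uniqueness argument in (iii) is also correct as stated, since every chord between two non-adjacent chain vertices leaves $f$, so the fan diagonals from $a_{i+1}$ are the only diagonals available and each reflex vertex forces one.
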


\begin{figure}
\begin{center}
\includegraphics{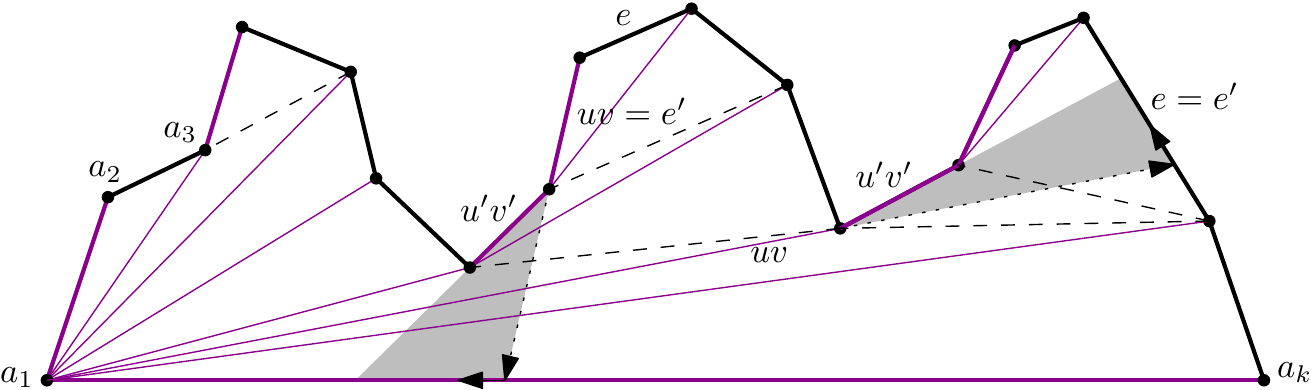}
\end{center}
\caption{The shortest path tree SPT from $a_1$ to all other vertices
(in purple). Additional triangulation edges generated in our algorithm
are dashed. The figure illustrates a forward step (right shaded area)
and a backward step (middle area).}
\label{fig:SPT}
\end{figure}

The idea now is to generate the triangulation during a depth-first traversal of
the edges of SPT, starting
from the base edge $a_1a_k$ and visiting the children
of each vertex in \emph{counterclockwise} order.
This traversal can be interpreted geometrically as an Euler tour
of the plane graph SPT.
Since there is no space for a stack, this tour must be performed
in a ``stateless'' manner, using angular sweeps to determine the
next edge to be traversed.
We call a vertex \emph{finished} if it has been visited by
the tour and will not be visited again. Otherwise, the vertex is
\emph{unfinished}. In an Eulerian traversal of SPT, the vertices
of $H$ become finished in order from right to left.

Our algorithm maintains two edges: (i) the current edge of the tour $uv$,
with $v$ lying to the right of $u$; and
(ii) the edge $e=a_ia_{i+1}$ of $H$ such that
$\{a_{i+1},a_{i+2}\ldots,a_k\}$ are the finished vertices of the tour.
Observe that we can use
 $e$ to distinguish between vertices that
are finished and those that are not.
In each step we distinguish three different cases,
and we accordingly perform a step as follows.

\textbf{Case 1:}
 \emph{$v$ is not incident to $e$}.
We perform a \emph{forward} step
into the subtree rooted at $v$.

\textbf{Case 2:} \emph{$v=a_{i}$,
but $uv$ is a chord of $H$}.
We perform a \emph{sideways} step to the
next edge out of $u$
that follows $uv$ in counterclockwise order.

\textbf{Case 3:} \emph{$v=a_{i}$, 
and $uv$ is the 
 edge $a_{i-1}a_i$ of $H$}. We do a \emph{backward} step and
return to the parent of $u$.

\begin{figure}
\begin{center}
\begin{tabular}{ccc}
\includegraphics[width=.29\textwidth]{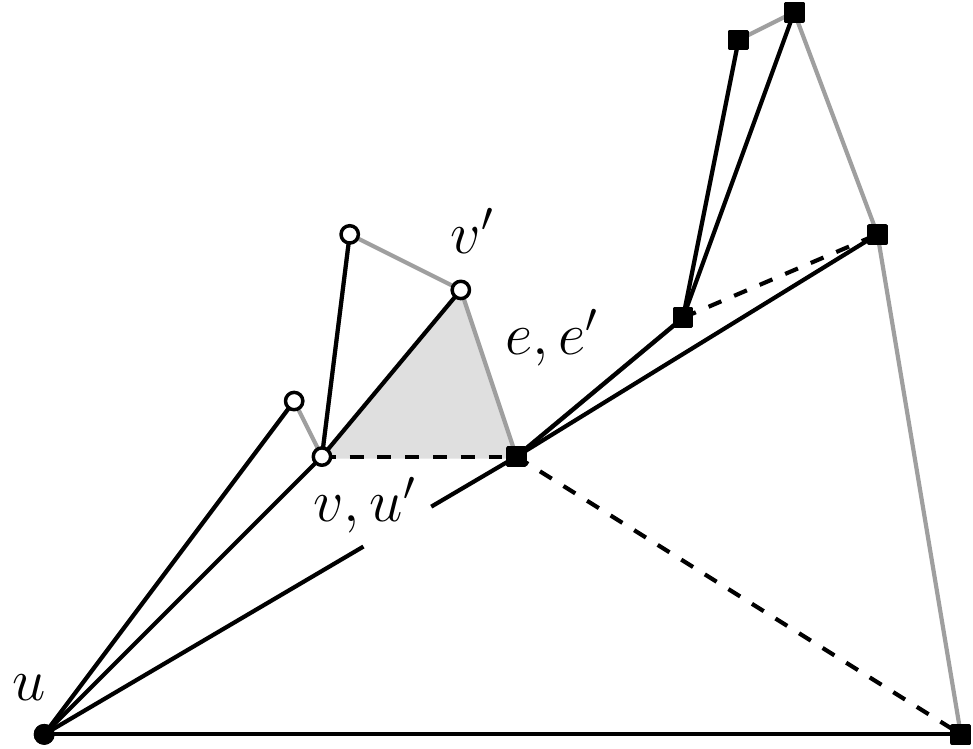} &
\includegraphics[width=.29\textwidth]{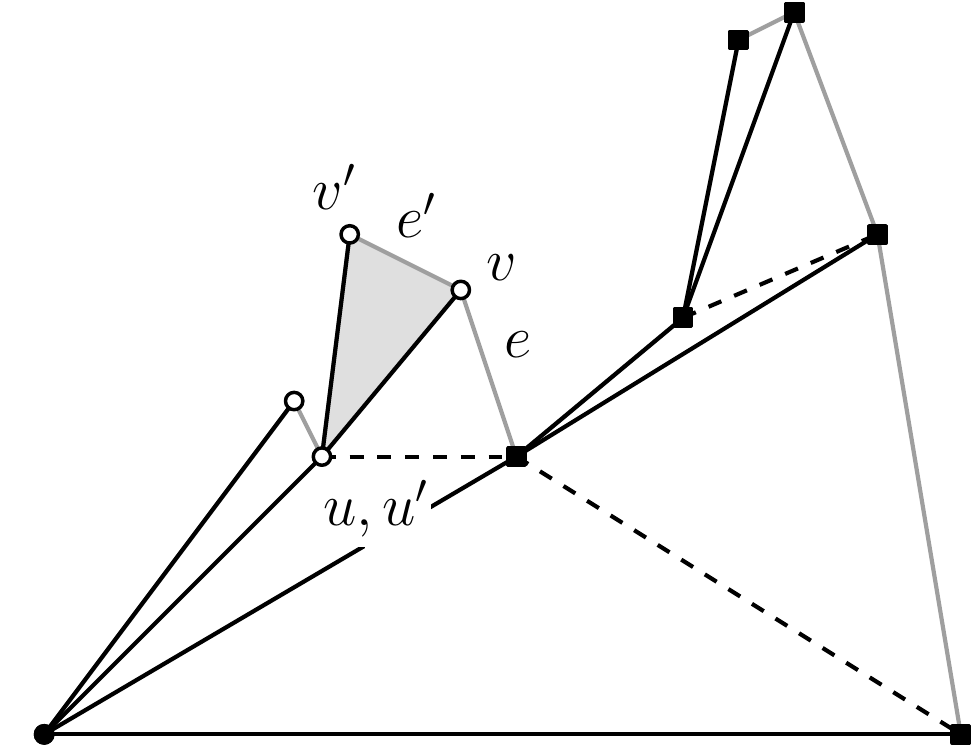} &
\includegraphics[width=.29\textwidth]{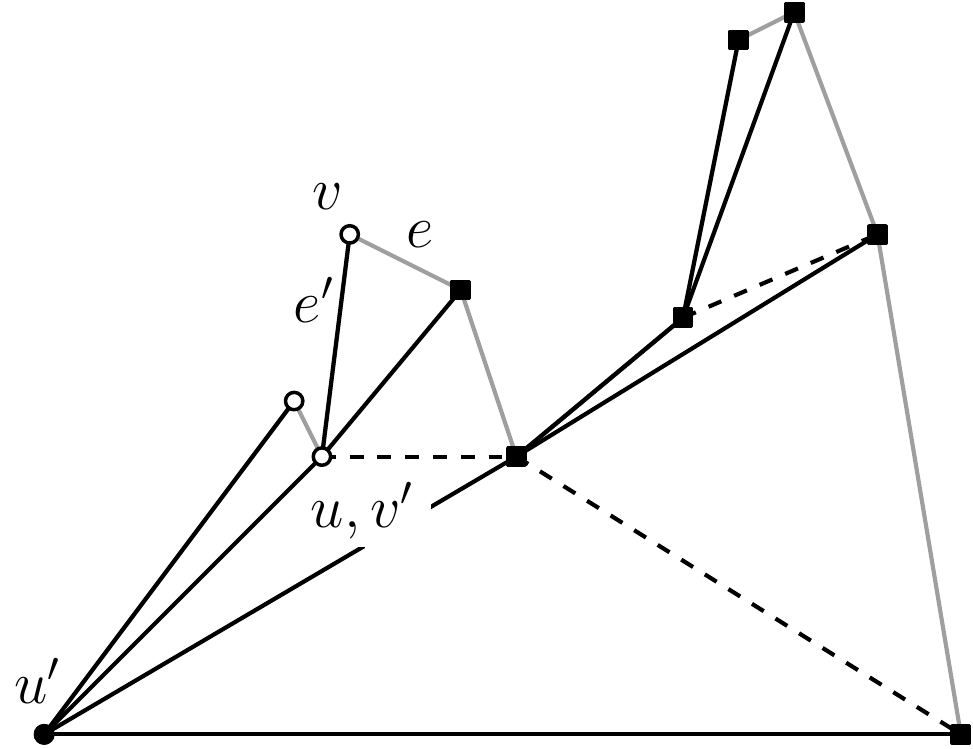}
\\
forward step & sideways step & backward step
\end{tabular}
\end{center}
\caption{The three cases for the Eulerian traversal of SPT. The triangles that are reported while performing a step are shaded. Edges in the triangulation that are not contained in SPT are drawn dashed.}
\label{fig:unimonotone-traversal}
\end{figure}

\noindent
We start the algorithm with
 a sideways step from
$uv  \eqdef e \eqdef a_1a_k$ (as an exception to the above rules).
The algorithm continues until all vertices are finished and it
tries to make a backward step from $e=a_1a_2$.
The details of the three steps are straightforward.
In each step, we determine the
values $u'$, $v'$ and $e'$ for the next step, and we output some triangles
of the triangulation (see Fig.~\ref{fig:unimonotone-traversal}; see also
Fig.~\ref{fig:SPT} for the forward and the backward step).

\textbf{Forward step.}
Let $x$ be the intersection of the line $\overline{uv}$ with the
edge $e$. We perform a counterclockwise angular sweep of the segment
$vx$ around the vertex $v$, letting $x$ move along
$e=a_ia_{i+1}$ (By Proposition~\ref{prop:spt}, SPT makes only
upward bends, so the line $\overline{uv}$ intersects $e$ and
the segment $vx$ lies inside $H$).
Let $z$ be the first vertex hit by the sweep
(note that $z$ might be $a_i$).
Since $z$ is the first child of $v$,
we update $u'v' \eqdef vz$, leave $e' \eqdef e$ and
output the triangle $vv'a_{i+1}$ (by Proposition~\ref{prop:spt}(iii)).

\textbf{Sideways step.}
Since $a_i$ is now finished, we
proceed to the previous edge $e' := a_{i-1}a_i$.
We make a counterclockwise angular sweep of the segment $uv$,
around $u$, letting $v$ mode along $e'$.
Let $z$ be the first vertex that is hit (again, $z$ might be $a_{i-1}$.
We set $u'v' \eqdef uz$ and output the
triangle  $uvv'$ (by Proposition~\ref{prop:spt}(iii)).

\textbf{Backward step.}
Since $a_i$ is now finished, we
proceed to the previous edge $e' := a_{i-1}a_i = uv$.
Let $x$ be the intersection of the line $\overline{uv}$
with the base edge $a_1a_k$. As before, Proposition~\ref{prop:spt}
ensures that $x$ exists.
We do a clockwise angular sweep of the segment $ux$ around
$u$, keeping $x$ on the base edge.
Call the first vertex that is hit $z$.
We set $u'v' \eqdef zu$.

Each edge of SPT is visited at most twice, so there are $O(k)$ steps.
Each step involves one angular sweep and some
additional processing that takes constant time. Thus, we get

\begin{theorem}
\label{thm:histo_triang}
Let $H$ be a \histogram{} with $k$ vertices. There is an algorithm
that does one counterclockwise scan of $H$
and outputs all triangles in a triangulation of $H$.
The algorithm needs $O(k)$ angular sweeps and
additional processing time $O(k)$ as well as constant work-space.
\end{theorem}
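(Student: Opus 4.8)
The plan is to check, separately, (i) the resource bounds, which are essentially immediate, and (ii) that the procedure faithfully emulates the depth-first Euler tour of SPT without a stack, so that it terminates and reports precisely the triangles of the unique triangulation identified in Proposition~\ref{prop:spt}(iii). For (i): between consecutive steps the algorithm retains only the endpoints of the current edge $uv$ and of the edge $e$, i.e.\ a constant number of pointers into the read-only polygon, so the work-space is $O(1)$; a single step consists of one angular sweep (the only non-constant-time subroutine) plus a constant amount of arithmetic — one line--segment intersection, a few orientation tests, one \texttt{output}; and, since the tour traverses every edge of SPT at most twice and SPT has $k-1$ edges, there are $O(k)$ steps in total, hence $O(k)$ angular sweeps and $O(k)$ further processing.

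The substance of the argument is (ii), and I would organise it around the loop invariant implicit in the description: at the start of each step, $uv$ is the SPT edge currently traversed by the Euler tour, with $v$ its right endpoint, and $e=a_ia_{i+1}$ is the hull edge of $H$ for which $a_{i+1},\dots,a_k$ are exactly the vertices the tour has already finished. One first checks that this holds after the exceptional initial sideways step from $uv=e=a_1a_k$, and then that each of the three cases restores it while correctly computing $u',v',e'$. The key combinatorial observation is that the cases are exhaustive and mutually exclusive: by the invariant, $v$ is finished iff $v=a_i$, which separates \textbf{Case~1} from \textbf{Cases~2} and~\textbf{3}; and when $v=a_i$ the edge $uv$, being an SPT edge whose right endpoint is $a_i$, is either a chord of $H$ (so the tour has finished the subtree rooted at $a_i$ and must pass to the next child of $u$ in counterclockwise order — a sideways step) or the hull edge $a_{i-1}a_i$ (so that subtree and $u$ itself are done, and the tour must climb to the parent of $u$ — a backward step). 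Here Proposition~\ref{prop:spt} carries the geometry: parts~(i)--(ii) guarantee that the region between two consecutive shortest paths is a pseudotriangle whose upper boundary is a concave chain of SPT edges, which is exactly why the swept segment ($vx$, $uv$, or $ux$ in the three cases, with the free endpoint sliding along $e$, $e'$, or the base) stays inside $H$ and meets the intended child or parent first; and part~(iii) tells us the correct triangulation of such a face is the fan from its rightmost vertex $a_{i+1}$, so that the triangle emitted in a forward step ($vv'a_{i+1}$) or a sideways step ($uvv'$) — the shaded triangle in Fig.~\ref{fig:unimonotone-traversal} — is precisely the one fan-triangle that is pinned down when the tour first crosses a new SPT edge.

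It then remains to argue that every triangle is reported once and only once, and that the polygon is read in a single counterclockwise pass. For the first point I would charge each triangle of the fan of the face between the shortest paths to $a_i$ and to $a_{i+1}$ to the SPT edge lying opposite the apex $a_{i+1}$ in that triangle; this edge belongs to the upper chain of that face, is crossed left-to-right by the Euler tour exactly once, and it is precisely at that crossing — a forward or a sideways step — that the triangle is output (backward steps produce no triangle, and the steps touching the same face along its lower side or right side emit triangles charged to neighbouring SPT edges). Hence the emitted triangles are exactly the $k-2$ triangles of the triangulation, each once. For the second point, observe that throughout the run $e$ only ever moves from $a_ia_{i+1}$ to $a_{i-1}a_i$, so it sweeps the boundary of $H$ monotonically and comes to rest at $e=a_1a_2$, where the attempted backward step signals termination. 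I expect the only real difficulty to lie in the careful bookkeeping of the case analysis and the invariant — in particular the degenerate possibilities $z=a_i$ in the forward and sideways steps, the handling of $e$ at start and finish, and SPT edges that coincide with hull edges — rather than in any conceptual point: once Proposition~\ref{prop:spt} is available, the action of each step is forced.
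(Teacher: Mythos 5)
Your proposal is correct and follows essentially the same approach as the paper: a stateless Euler tour of SPT with the forward/sideways/backward case analysis, Proposition~\ref{prop:spt} supplying the geometric justification for the sweeps and the fan triangulation, and the bound of $O(k)$ steps from each SPT edge being visited at most twice. In fact your write-up is more detailed than the paper's own argument, which states only the step count and the per-step cost; your explicit loop invariant, the charging of each triangle to the SPT edge opposite its apex, and the monotone motion of $e$ for the single-scan claim are all consistent elaborations of what the paper leaves implicit.
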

In particular, we have shown the following.
\begin{corol}
Let $H$ be an explicitly given \histogram{}\ with $k$ vertices.
Then we can output all the triangles in a triangulation of $H$
in $O(k^2)$ time with constant work-space.
\end{corol}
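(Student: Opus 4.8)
The plan is to derive the corollary directly from Theorem~\ref{thm:histo_triang} by bounding the cost of a single angular sweep when the input is an explicit histogram. Since $H$ is given explicitly — say, as an array of its $k$ vertices in counterclockwise order — I can enumerate all edges and all vertices of $H$ in $O(k)$ time. In particular, each angular sweep can be carried out in $O(k)$ time: as explained in Section~\ref{sec:primitive}, it suffices to scan over all vertices of $H$, test for each whether it lies in the region swept by the moving segment, and maintain the one attaining the smallest angle. This uses only $O(1)$ additional words.

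Next I would simply combine the two bounds. By Theorem~\ref{thm:histo_triang}, the triangulation algorithm performs one counterclockwise scan of $H$, then $O(k)$ angular sweeps together with $O(k)$ further processing, all in constant work-space. Charging $O(k)$ time to each of the $O(k)$ sweeps and adding the $O(k)$ overhead (including the initial scan) yields a total running time of $O(k^2)$, while the work-space stays $O(1)$: neither the sweeps nor the bookkeeping of the current tour edge $uv$ and the boundary edge $e$ require more than a constant number of words, and each triangle is reported exactly once by the description of the forward, sideways, and backward steps.

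I do not expect any genuine obstacle here. The only point needing a word of justification is that an angular sweep over an explicitly stored histogram costs $O(k)$ rather than $O(n)$ for an ambient \pslg; this is immediate, since $H$ itself has only $k$ edges and no other edges are present to be tested against. Everything else is a direct restatement of Theorem~\ref{thm:histo_triang}.
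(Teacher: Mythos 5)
Your argument is correct and matches the paper's (implicit) reasoning exactly: the corollary follows from Theorem~\ref{thm:histo_triang} by observing that each angular sweep over an explicitly given \histogram{} with $k$ vertices costs $O(k)$ time, so $O(k)$ sweeps plus $O(k)$ overhead give $O(k^2)$ total with constant work-space. No gaps.
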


\paragraph{Remark.}
Our algorithm produces the same triangulation as the classic
algorithm for triangulating monotone polygons~\cite{GareyJoPrTa78}.
This algorithm processes the
vertices from left to right, and it
maintains a stack that represents
the lower convex hull of the vertices $\{a_1,\ldots,a_i\}$ encountered
so far.
This lower convex hull is also the shortest path from $a_1$ to $a_i$.
 When processing the next vertex $a_{i+1}$,
 the vertices that disappear from the hull are popped from the stack
 and appropriate triangles between the popped vertices and
 the new vertex are generated.

The classic algorithm also performs an implicit depth-first traversal of SPT,
but in contrast to our approach,
the children
of a vertex are visited in \emph{clockwise} order.
Even though we can modify this algorithm for constant work-space,
it does not perform a single scan of the vertices of $H$, a property that
will be crucial in the next section.

\section{Triangulating a \Pslg}
\label{sec:polygon}

We now describe how to triangulate
a \pslg\ $G$ with $n$ vertices in $O(n^2)$
time and with constant work-space.
Our algorithm only needs $O(n)$ scans over the edges of
$G$, and we make no assumptions about the scanning order.

In the usual model of computation, it is well known that $G$ can
be triangulated using the \emph{vertical decomposition}
(cf. Chazelle and Incerpi~\cite{ChazelleIn84}).
Since our algorithm follows the same strategy, we briefly review
how this works: first, we compute
the edges of the convex hull of $G$ and add them to $G$. Then
we find the vertical decomposition of the resulting graph,
and insert an edge between any two non-adjacent vertices of
$G$ that are contained in the same trapezoid
(see Fig.~\ref{fig:histo_decomp}).
Consider the resulting graph $G'$ that contains the convex hull edges
and the newly inserted edges.
All  interior faces of $G'$ are \histogram s:
since every vertex (except for the left- and the rightmost ones)
has at least one edge incident to either side,
the faces must be monotone polygons.
Suppose there is a face $f$ with vertices on both the upper and the lower
boundary. Then there would be
a trapezoid with a vertex at the upper and at the lower
boundary, and we would have inserted an edge between those vertices,
so $f$ cannot exist. Now, we can easily triangulate each face of the
resulting graph with an algorithm for triangulating \histogram s.

In our setting, we cannot explicitly compute the decomposition of $G$.
Instead, we enumerate all edges of $G$ and of the convex hull of
$G$. Note that convex hull edges can be found
in linear time per edge through Jarvis march~\cite{s-chc-04}.
For each such edge $e$, we check whether $e$ is the  base of a
\histogram{}.
This is the case if and only if
$e$ is incident to more than one trapezoid above it or more than
one trapezoid below  it.
As described in Section~\ref{sec:primitive}, this can
be checked in linear time.
(An inserted chord is never the base of a \histogram.)
Once the base of a \histogram{} $H$ is at hand, we would like to
use Theorem~\ref{thm:histo_triang} to triangulate it.
For this, we need  to enumerate the
vertices of $H$ in counterclockwise order and to perform
angular sweeps. The former can be done by enumerating the trapezoids
that are incident on one side of $e$, and takes $O(n)$ time
per trapezoid (see Section~\ref{sec:primitive}).
The latter can be done in $O(n)$ time by enumerating the vertices of $G$, because
the vertices outside $H$ cannot affect the angular scan.
Thus, by Theorem~\ref{thm:histo_triang}, it takes $O(nk)$ time to triangulate
$H$, where $k$ is the number of vertices in $H$.
Since the total size of
all \histogram s is $O(n)$, we get the following result.

\begin{theorem}
Given a \pslg\ $G$ with $n$ vertices,
we can output all triangles in a triangulation of $G$ in $O(n^2)$
time with constant work-space.
\end{theorem}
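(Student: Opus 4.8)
The plan is to reduce the general problem to the histogram case already solved in Theorem~\ref{thm:histo_triang}, following the classical vertical-decomposition strategy of Chazelle and Incerpi~\cite{ChazelleIn84} but carrying it out without ever storing the decomposition. First I would augment $G$ with its convex hull edges: using Jarvis' march~\cite{s-chc-04} we enumerate the $h$ hull edges in $O(nh)=O(n^2)$ total time and constant work-space. Conceptually, adding these edges and the chords that split every trapezoid containing two non-adjacent vertices produces a graph $G'$ all of whose bounded interior faces are histograms (this is the structural observation quoted just before the theorem statement). The key point is that we need not build $G'$ explicitly; instead we enumerate candidate base edges and triangulate the corresponding histogram on the fly.

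The main steps, in order, are: (1) enumerate every edge of $G$ together with every convex hull edge; (2) for each such edge $e$, decide in $O(n)$ time whether $e$ is the base of a histogram of $G'$ — by the preliminaries this amounts to checking whether $e$ has more than one trapezoid on one of its two sides, which is a linear-time vertical-decomposition-incidence test, and an inserted chord is never a base so we only ever test real edges of $G$ or of the hull; (3) once a base of a histogram $H$ is found, supply the two procedures that Theorem~\ref{thm:histo_triang} needs, namely a counterclockwise enumeration of the vertices of $H$ and the angular sweeps. The vertex enumeration is obtained by walking along the trapezoids incident to $e$ on the appropriate side, at $O(n)$ time per trapezoid; the angular sweeps can be performed in $O(n)$ time each simply by scanning all $n$ vertices of $G$, since vertices outside $H$ are irrelevant to a sweep inside $H$. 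Hence Theorem~\ref{thm:histo_triang} triangulates $H$ in $O(nk)$ time with constant work-space, where $k=|V(H)|$.

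For the running time I would argue that the total size of all histograms is $O(n)$: each histogram face of $G'$ has its boundary composed of original edges, hull edges, and vertical chords, and by planarity of the vertical decomposition the sum of the face sizes is linear. Summing the $O(nk)$ costs over all histograms gives $O(n^2)$; the hull computation and the per-edge base tests also fit in $O(n^2)$. Throughout, only a constant number of words are in use: Jarvis' march, the incidence tests, the trapezoid walk, and the histogram traversal of Theorem~\ref{thm:histo_triang} are each constant-work-space, and we never run two of them in a nested way that would accumulate state (we finish triangulating one histogram before moving to the next candidate edge). Since every triangle lies in exactly one histogram face and Theorem~\ref{thm:histo_triang} outputs each triangle of that face once, every triangle of the final triangulation of $G$ is output exactly once.

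I expect the main obstacle to be the bookkeeping that guarantees each edge is produced as part of exactly one histogram and that no triangle is emitted twice or skipped — in particular, making sure that when we identify $e$ as a base we always pick the correct side (the one on which $e$ bounds a histogram) and that a convex hull edge bounding the unbounded face is never mistakenly processed. A secondary subtlety is confirming that the angular sweeps required by the histogram algorithm remain correct when run against all of $G$ rather than against an explicitly extracted copy of $H$; this relies on the fact that the segment being swept stays inside $H$ (Proposition~\ref{prop:spt}), so any vertex of $G$ lying outside $H$ is automatically outside the swept region and can be ignored without a separate containment test.
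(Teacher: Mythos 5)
Your proposal is correct and follows essentially the same route as the paper: augment $G$ with its convex hull edges via Jarvis' march, test each original or hull edge in $O(n)$ time for being the base of a \histogram{} via the trapezoid-incidence criterion, triangulate each \histogram{} on the fly using Theorem~\ref{thm:histo_triang} with $O(n)$-time vertex enumeration and angular sweeps, and charge the $O(nk)$ cost per \histogram{} against the $O(n)$ total size of all \histogram{}s. The subtleties you flag (choosing the correct side of a base, not processing the unbounded face) are real but are exactly the ones the paper also leaves implicit.
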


\begin{figure}
\begin{center}
\includegraphics[scale=0.55]{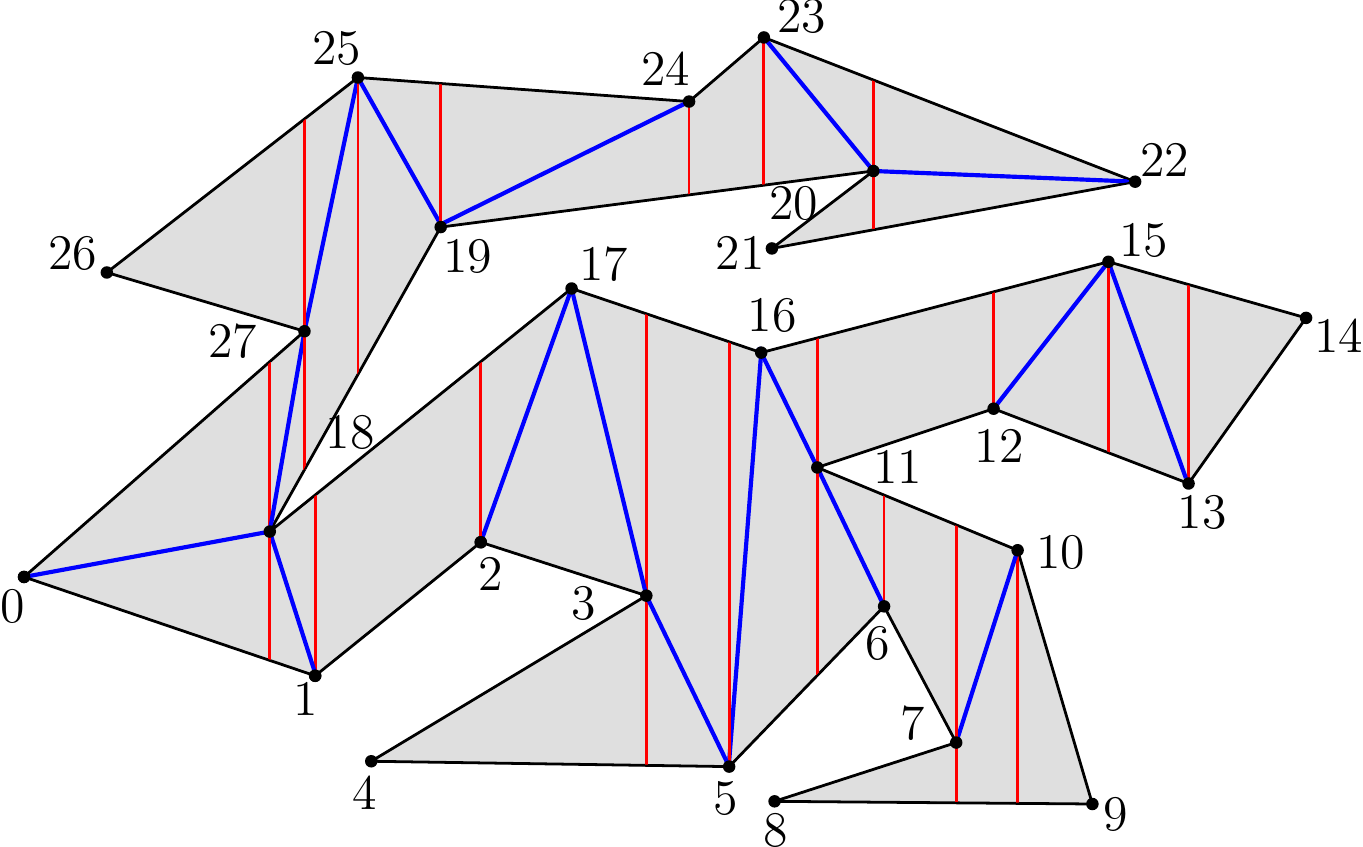}
\end{center}
\caption{Decomposition of a polygon into \histogram s.}
\label{fig:histo_decomp}
\end{figure}

\section{A Memory-Adjustable Data Structure for Shortest Paths}\label{sec:memadj}

Let $S$ be some set of $n$ objects.
In general, the purpose of a \emph{data structure} $D$ for
$S$ is to support certain \emph{queries} on
$S$ efficiently.
Ideally, $D$ has linear size,
and the query algorithm searches within $D$ with only $O(1)$ cells of
additional work-space.
In the classic setting, the whole set $S$ is contained in the
data structure,
so the storage must be at least as large as the input.

We take a different approach: recall that our input
cannot be modified. Thus, our strategy is to
preprocess the data and to store some additional information
in a data structure of size $s$ (for some parameter $s \leq n$).
The objective is to design an algorithm that uses this additional
information in a way that supports efficient query processing.
Ideally, we would like to have a trade-off between the amount of additional
storage and the running time of the algorithm. This has been done
successfully for many other classic problems such as selection and
sorting~\cite{MunroRa96,RamanRa98}.

Naturally, the most important quality measure for any such algorithm is the
query time. However, the preprocessing
time for constructing the data structure should also be taken into account. In
this section, we describe a data structure for computing the shortest path
between any two points inside a simple polygon $P$ with $n$ vertices.
It is known that this takes $O(n^2)$ time with constant
work-space~\cite{AsanoMuRoWa11}  and $O(n)$ time  with
linear work-space~\cite{GuiHers89}.

We describe how to construct a data structure that requires $O(s)$
words of storage and that can find the shortest paths between any two
points in $P$ in time $O(n^2/s)$. The preprocessing time is $O(n^2)$, and
the query algorithm is allowed to use $O(s)$ additional words of work-space.
Note that for $s=1$ the query time is $O(n^2)$, while for
$s = n$ the query time is $O(n)$. Thus, we achieve a smooth trade-off between
the known results.

\begin{figure}[htbp]
\centerline{\includegraphics[scale=0.72]{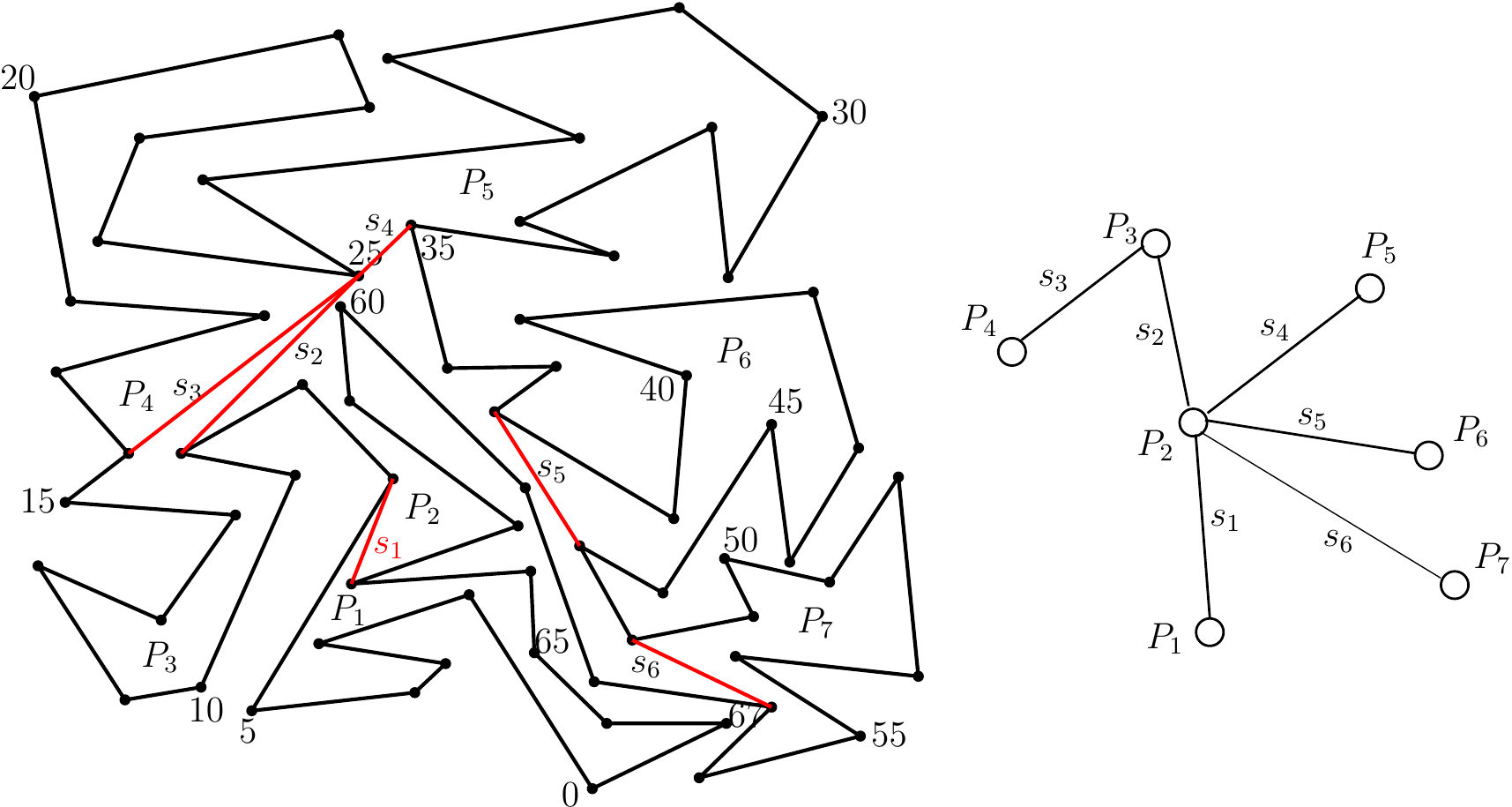}}
\caption{A decomposition of a simple polygon into
$7$ disjoint subpolygons using $6$ chords and its
associated tree decomposition.}
\label{fig:slab7}
\end{figure}

\subsection{Preprocessing}
During the preprocessing phase, we use our allotted space to
compute $O(s)$ chords that partition $P$ into $O(s)$ subpolygons with
$\Theta(n/s)$ vertices each, see Figure~\ref{fig:slab7}.
In the following, these chords will be called
\emph{cut edges}. Our data structure stores the subpolygons
together with a tree that represents the adjacencies between them.
Each subpolygon is represented by the corresponding cut edges and segments of
the boundary of $P$, in counterclockwise order. The boundary segments
can be described with $O(1)$ storage cells each, by using
the index of the first and the last edge. Thus, the total
space for the data structure is $O(s)$.


We use the following well-known observation~\cite{Chazelle82},
which is based on the fact that
any triangulation of a simple polygon dualizes to a tree with
$n-2$ nodes and
maximum degree $3$.
 \begin{proposition}
\label{prop:cut}
   Let $P$ be a simple polygon with $n \geq 4$ vertices.
   Any triangulation of $P$ has a chord that splits $P$ into
   two subpolygons with at most
   $\lfloor (2/3)n\rfloor + 1 \leq (5/6) n$ vertices each.
\qed
 \end{proposition}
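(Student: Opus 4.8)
The plan is to use the standard duality between triangulations of a simple polygon and trees, together with a centroid-style argument. First I would fix an arbitrary triangulation $T$ of $P$ (it exists, and by our own results we could even compute one, though for the existence statement we only need that some triangulation exists). Since $P$ has $n$ vertices, $T$ has exactly $n-2$ triangles. Form the dual tree $\mathcal{T}$ whose nodes are the triangles of $T$ and whose edges connect two triangles sharing a diagonal (chord) of $P$; this is a tree on $n-2$ nodes, and since every triangle has three sides, each of which is either a polygon edge or a shared diagonal, $\mathcal{T}$ has maximum degree $3$.

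The key step is the classical fact that a tree on $m$ nodes with maximum degree $3$ has an edge whose removal splits it into two components each of size at most $\lceil 2m/3 \rceil$ (equivalently, at least $\lceil m/3 \rceil$). I would prove this by a short walk argument: start at any leaf and repeatedly move to the neighbor on the ``heavy'' side (the side containing more nodes); the size of the heavy side strictly decreases along such a walk, so the process reaches an edge $uv$ where neither side has more than $2m/3$ nodes — because at the vertex $v$ of degree at most $3$, when we step from $u$ to $v$, the component of $v$'s side (minus $u$) splits into at most two further subtrees, and if one of them had more than $2m/3$ nodes we would have continued, while the $u$-side has at most $2m/3$ by the stopping condition. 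Applying this with $m = n-2$ gives a dual edge, i.e.\ a chord $c$ of $P$, such that the two sets of triangles on either side of $c$ have at most $\lceil 2(n-2)/3\rceil$ triangles each.

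It remains to translate a bound on the number of triangles into a bound on the number of vertices of the two subpolygons. Removing the chord $c$ splits $P$ into two subpolygons $P_1$ and $P_2$; if $P_i$ contains $t_i$ triangles of $T$, then $P_i$ is a simple polygon triangulated into $t_i$ triangles, hence has $t_i + 2$ vertices. With $t_i \le \lceil 2(n-2)/3\rceil \le (2/3)n - 1/3 + 1 = (2/3)n + 2/3$, we get that $P_i$ has at most $(2/3)n + 2/3 + 2$ vertices; a small amount of care with the floor — using that $t_i$ is an integer and tracking the two shared vertices of $c$ (which are counted in both $P_1$ and $P_2$) — yields the claimed bound $\lfloor (2/3)n\rfloor + 1$, and since $n \ge 4$ this is at most $(5/6)n$. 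The main obstacle is purely the bookkeeping in this last step: one must be careful about whether the chord endpoints are double-counted and about the exact rounding, so I would state the tree lemma with the sharp $\lceil m/3\rceil$ lower bound on the smaller side and then verify the two floor inequalities directly.
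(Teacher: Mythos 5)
The paper does not actually prove this proposition: it is stated with a \qed{} and attributed to Chazelle, with only the hint that the dual of a triangulation is a tree with $n-2$ nodes and maximum degree $3$. Your proposal fills in exactly that intended argument (dual tree, degree-$3$ separator edge, translate triangle counts into vertex counts via ``a subpolygon with $t$ triangles has $t+2$ vertices''), so in structure it matches the route the paper has in mind, and the skeleton is sound.

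There is, however, a genuine quantitative slip in the one step you yourself flag as delicate, and it is not repaired by ``care with the floor.'' The tree lemma in the form you state it, larger side at most $\lceil 2m/3\rceil$, yields subpolygons with up to $\lceil 2m/3\rceil + 2$ vertices, and for $m \equiv 2 \pmod 3$ (i.e.\ $n \equiv 1 \pmod 3$) this is $\lfloor 2n/3\rfloor + 2$, one more than claimed. Your proposed fix --- the ``sharp'' version with smaller side at least $\lceil m/3\rceil$ --- is false: for the star on $m=4$ nodes (dual of a hexagon triangulated as a central triangle with three ears) every edge has smaller side $1 < \lceil 4/3\rceil = 2$. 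The correct sharp statement, which is exactly what the proposition needs, is that a tree with $m\ge 2$ nodes and maximum degree $3$ has an edge whose removal leaves each component with at least $\lceil (m-1)/3\rceil$ nodes, equivalently at most $\lfloor (2m+1)/3\rfloor$; with $m = n-2$ this gives subpolygons with at most $\lfloor (2m+1)/3\rfloor + 2 = \lfloor 2n/3\rfloor + 1$ vertices, and the hexagon example shows this is tight. (Your walk/minimizing-edge argument does prove this sharper constant once you carry the ``$-1$'' through; note also that for the only place the paper uses the proposition, the $(5/6)^r$ decay in Theorem~\ref{thm:preprocess}, your weaker $\lceil 2m/3\rceil$ version would already suffice for all $n\ge 8$.)
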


A chord as in Proposition~\ref{prop:cut} is called a \emph{balanced cut}.

\begin{theorem}\label{thm:preprocess}
Let $P$ be a simple polygon with $n$ vertices. For any $s\leq n$,
there exists a set of $O(s)$ chords that are pairwise non-intersecting and that
partition $P$ into $O(s)$ subpolygons with $\Theta(n/s)$ vertices each.
The chords can be found in $O(n^2)$ time using constant work-space.
\end{theorem}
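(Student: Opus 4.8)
The plan is to repeatedly apply the balanced‑cut lemma (Proposition~\ref{prop:cut}) in a divide‑and‑conquer fashion, stopping the recursion once a subpolygon has $\Theta(n/s)$ vertices. First I would set a threshold, say $t = \lceil n/s \rceil$, and maintain the invariant that we only recurse on subpolygons with more than $ct$ vertices for a suitable constant $c$. Each application of Proposition~\ref{prop:cut} to such a subpolygon $Q$ with $m$ vertices produces a chord splitting $Q$ into two pieces, each with at most $(5/6)m + O(1)$ vertices; after $O(\log(m/t))$ levels of recursion every piece has at most $O(t) = O(n/s)$ vertices. Since each cut edge is a chord of $P$ lying strictly inside the corresponding subpolygon, distinct cut edges are pairwise non‑crossing. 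Charging the leaves of this recursion tree: each leaf subpolygon has $\Omega(t)$ vertices of $P$ on its boundary that are ``private'' in an amortized sense (a standard accounting for recursion trees whose nodes shrink geometrically and whose leaves have size $\Theta(t)$ shows the number of leaves, hence the number of cuts, is $O(n/t) = O(s)$). This yields the claimed $O(s)$ chords and $O(s)$ subpolygons, each of size $\Theta(n/s)$; the lower bound $\Omega(n/s)$ on subpolygon size follows because we do not cut pieces that are already below the threshold.

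The more delicate point is executing this \emph{within constant work‑space and in $O(n^2)$ time}, since we cannot afford to store the recursion stack or even a triangulation. Here the key observation is that we never actually need a triangulation: to find one balanced cut of a given subpolygon $Q$ it suffices to run, inside $Q$, the constant‑work‑space PSLG/polygon triangulation of Section~\ref{sec:polygon} (or directly the \histogram{}-based triangulation of Section~\ref{sec:hist}), which \emph{outputs} triangles one at a time, and to simulate the well‑known centroid‑decomposition argument behind Proposition~\ref{prop:cut} on the fly. Concretely, as triangles are streamed we can, with $O(1)$ counters, locate a triangle whose removal leaves two sides each carrying at most $\lfloor 2m/3\rfloor+1$ vertices of $Q$; the chord we want is then one of the sides of that triangle. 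Since a subpolygon of size $m$ is specified in our data structure by $O(1)$ cells (indices of its $O(1)$ boundary arcs and cut edges of $P$), and triangulating it costs $O(m\cdot n)$ time in our model (the angular sweeps still scan all $n$ vertices of $P$), one balanced cut costs $O(mn)$ time.

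To avoid an explicit stack I would make the recursion \emph{stateless}, in the same spirit as the Eulerian traversal in Section~\ref{sec:hist}: order the $O(s)$ final subpolygons canonically (e.g.\ by the lexicographically smallest boundary‑edge index they contain) and, for each rank $j = 1,\dots,O(s)$, recompute from scratch the sequence of balanced cuts leading to the $j$‑th subpolygon, always descending into the child determined by a fixed tie‑breaking rule. Producing one subpolygon this way replays $O(\log s)$ balanced cuts on pieces whose sizes decrease geometrically from $n$, costing $\sum_i O\!\big((5/6)^i n \cdot n\big) = O(n^2)$ time; over all $O(s)$ ranks this is $O(s n^2)$, which is too much, so instead I would recurse globally once but push only $O(1)$ information per level by exploiting that the recursion depth is $O(\log s)$ and re‑deriving a node's two children from its $O(1)$ description — i.e.\ an honest DFS of the $O(s)$‑leaf tree in which each node is reconstructed from its parent's $O(1)$ label and a single integer child‑index, so the ``stack'' is $O(\log s)$ integers, which we can afford since $s \le n$ and we are permitted $O(\log s) \le O(s)$ words here. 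Summing the cost over the recursion tree, a node of size $m$ costs $O(mn)$, and the sizes at each level sum to $O(n)$, so the total is $O(n)$ levels‑times‑$O(n)$ $= O(n^2)$; but the depth is only $O(\log s)$, giving $O(n^2 \log s)$. The remaining gap to the stated $O(n^2)$ is closed by the usual observation that the work at level $i$ is dominated by a single largest piece and the per‑level totals themselves form a geometric series once we charge each cut to the \emph{smaller} side, yielding $O(n^2)$ overall. \textbf{The main obstacle I expect} is precisely this bookkeeping: reconciling ``find one balanced cut in $O(mn)$ time'' with ``only $O(1)$ work‑space during the triangulation stream but $O(\log s)$ permitted for the recursion stack'' while keeping the aggregate time at $O(n^2)$ rather than $O(n^2\log s)$ — the resolution is the geometric‑series charging above, but making that argument airtight (in particular that each piece of size $m$ is triangulated only $O(1)$ times, not once per descendant leaf) is the crux of the proof.
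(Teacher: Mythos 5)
Your high-level plan is the same as the paper's: set $t=\lceil n/s\rceil$, repeatedly apply Proposition~\ref{prop:cut} to every piece with more than $t$ vertices, and find each balanced cut by streaming a constant-work-space triangulation of the piece while keeping track of a good chord. Two of your complications are unnecessary, though. First, the paper proceeds in rounds, scanning the pieces already recorded in the $O(s)$-word output structure, so no recursion stack, no stateless DFS, and no $O(\log s)$-word bookkeeping is needed. Second, instead of trying to locate a centroid triangle of the dual tree with $O(1)$ counters from the triangle stream (which is far from obviously doable), the paper simply evaluates every chord the triangulation outputs: for a $k$-vertex piece, the two sides of a chord can be sized in $O(k)$ time from the stored boundary description, and one remembers the most balanced chord seen so far.

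The genuine gap is in your time analysis. You budget $O(mn)$ to triangulate an $m$-vertex piece on the grounds that the angular sweeps must scan all $n$ vertices of $P$; this is what lands you at $O(n^2\log s)$, and the charging argument you invoke to recover $O(n^2)$ does not work: charging $O(mn)$ to the smaller side, which has $\Theta(m)$ vertices, puts $O(n)$ on each such vertex, and a vertex can be on the smaller side $\Theta(\log s)$ times, so you are still at $O(n^2\log s)$. The missing observation is that a piece $Q$ with $k$ vertices is explicitly delimited by its stored cut edges and boundary arcs, so the triangulation and its angular sweeps need only examine the $k$ vertices of $Q$; finding a balanced cut of $Q$ therefore costs $O(k^2)$, not $O(kn)$. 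With that bound no charging is needed: in round $r$ every piece has at most $t_r=(5/6)^r n$ vertices and the piece sizes sum to at most $n+2p\le 3n$ (each chord duplicates its two endpoints), so round $r$ costs $\sum_i n_i^2\le t_r\sum_i n_i\le 3(5/6)^r n^2$, and summing the geometric series over rounds gives $O(n^2)$.
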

\begin{proof}
  We set $t := \lceil n/s \rceil$.
If $t \le 3$, we simply output a triangulation of $P$.
Otherwise, we use Proposition~\ref{prop:cut} iteratively to
split $P$ into smaller and smaller subpolygons.
In each round, we scan over all pieces computed so far,
and for every piece with more than $t$ vertices we find a balanced
cut.
In the end, each piece has between $t/6$ and $t$ vertices, so
we obtain $O(s)$ subpolygons of size $\Theta(n/s)$.

Let $Q$ be a subpolygon with $k$ vertices. We can find a balanced cut for $Q$
in time $O(k^2)$: triangulate $Q$ using Theorem \ref{thm:histo_triang}.
Whenever the algorithm outputs a new chord, we compute the size
of the two pieces it cuts, and we remember the one with
the most balanced cut. Note that the size of each piece can
be computed in $O(k)$ time: the  subpolygon $Q$ is represented as
a sequence of at most $O(k)$ cut edges or segments of the boundary of $P$,
and for each such segment we can determine the relevant size in
constant time.

Consider the $r$-th round, and  suppose we have $p\le n$ pieces of size
$n_1,\dots,n_p$, where $n_1+\dots+n_p=n+2p\le 3n$.
By Proposition~\ref{prop:cut},
we have $n_i \leq t_r := (5/6)^r n$ for all $i$.
Thus,  the running time for round $r$ is proportional to
\[
\sum _{i=1}^p n_i^2
\leq
t_r \sum _{i=1}^p n_i
\leq
t_r \cdot 3n =  3\Bigl(\frac{5}{6}\Bigr)^rn^2.
\]
Summing over all rounds, we get a total running time of $O(n^2)$, as claimed.
\end{proof}

\paragraph{Remark.}
Guibas and Hershberger~\cite{GuiHers89} showed that
if linear work-space is available, the preprocessing time can be reduced
to $O(n)$. For completeness, we briefly sketch their method.
First, we triangulate $P$ with Chazelle's
algorithm~\cite{Chazelle91}.
Then, we find the cut edges by greedily pruning the tree $T$ that
corresponds to the dual graph of the triangulation.
Set $t := \lceil n/s \rceil$.
Every vertex of $T$ has a weight, initialized to $1$.
In each round, we scan over the leaves of $T$. We remove those
leaves whose weight is between $t/3$ and $t+1$
and declare the corresponding edges to be cut edges. Then, we
delete the remaining leaves and add their weight to their parents.

In the end, the remaining part may have less than $t/3$ vertices. If so,
we remove one cut edge to merge this part with an adjacent one.
Since each vertex is visited once, the whole procedure takes linear time.

\subsection{Query Algorithm}
We now describe the query algorithm. Given two points $p ,q \in P$,
we would like to find the shortest path $\pi_{pq}$ between them,
while taking advantage of the precomputed polygon decomposition. The main idea
is to compute a shortest path for each subpolygon with the constant-work space
algorithm of Asano~\etal~\cite{AsanoMuRoWa11},
and to concatenate the resulting paths. However, additional steps are
necessary to deal with edges of $\pi_{pq}$ that cross several subpolygons,
so the algorithm gets a bit more involved.

First, let us quickly review the algorithm of
Asano~\etal~\cite{AsanoMuRoWa11} (referred to as AMRW from now on).
The AMRW-algorithm stores a triple $(v, r_1, r_2)$ of points.
The point $v$ is a vertex of $\pi_{pq}$, while $r_1$ and $r_2$ lie on
the boundary of $P$ (not necessarily vertices). The triple maintains the invariant
that all vertices of $\pi_{pq}$ up to $v$
have been reported. The line segments $vr_1$ and $vr_2$ cut off a
subpolygon $P' \subseteq P$ that contains the target $q$.
In each step, the algorithm shoots a ray into $P'$ that originates at $v$ and
that lies inside the \emph{visibility cone} determined by $vr_1$ and $vr_2$.
The direction of the ray is chosen according to a case distinction
whose details we omit. Let $r'$ be the point where this ray hits
the boundary of $P'$. The line segment $vr'$ divides
$P'$ into two parts, and by determining which part
contains the target $q$, we can find a new
triple $(v', r_1', r_2')$. This triple either yields a new vertex of $\pi_{pq}$,
or it makes $P'$ smaller.
AMRW show that in either case the ray shooting operation can be
charged to a vertex of $P$ in a way where every vertex is charged at most twice.
Thus, since each step takes linear time, the total
running time is bounded by $O(n^2)$. Please refer to
the original article for further details~\cite{AsanoMuRoWa11}.

Our algorithm uses a similar strategy: it also maintains a triple $(v,r_1,r_2)$
of points that fulfills the same invariant, and in each step it shoots a ray
to determine the triple for the next step. As long as the triple and the
ray are contained in a simple subpolygon of the decomposition, we can just
use the previous method without change while achieving the desired speedup.
We call this the \emph{standard situation}. However, if the points of the
triple are not contained in the same polygon, or if the ray crosses a cut edge
we need to take additional measures in order to quickly update the triple.
In this case, the algorithm switches to a different mode, the
\emph{long-jump situation}. We now describe the details.

\paragraph{Initialization.}
We start by locating the subpolygons
$P_p$, $P_q$ that contain $p$ and $q$. For this, we shoot upward
vertical rays from $p$ and from $q$, and we find the first edge (or cut edge)
$e_p$ and $e_q$ of $P$ that is hit. This takes linear time.
Then we determine the subpolygons
$P_p$ and $P_q$ that contain $e_p$ and $e_q$. If the edge is a cut edge, this
is immediate. If not, we go through the description of the subpolygons, and
for each edge sequence, we determine in
constant time whether it contains $e_p$ or $e_q$ by comparing indices.
This requires $O(s)$ time, since the total boundary of all  subpolygons
has $O(s)$ pieces.

If $P_p = P_q$, we
apply the constant-work-space method within $P_q$ and are done. Otherwise,
we take the tree that represents the polygon partition, and we find the path
between $P_p$ and $P_q$. Every edge on that path
corresponds to a cut edge that must be crossed by $\pi_{pq}$,
in the same order. For any subpolygon $P_i$ that is traversed by $\pi_{pq}$,
we define the \emph{entrance} of $P_i$ as the cut edge
through which $\pi_{pq}$ enters $P_i$ and the \emph{exit} of $P_i$ as
the cut edge through which $\pi_{pq}$ leaves $P_i$.
The remaining cut edges
are not be crossed by $\pi_{pq}$, so
we treat them as obstacles.

We initialize the triple $(v, r_1, r_2)$ as in AMRW.
Initially, all of $v$, $r_1$ and $r_2$ are contained in the same subpolygon
of $P$, and we call this polygon $P_{\mathrm{curr}}$.
Next, the algorithm enters the \emph{standard situation}.


\paragraph{Standard Situation.}
In the standard situation, both endpoints $r_1$ and $r_2$ are contained
in the same polygon $P_\text{curr}$. The vertex $v$ does not necessarily lie
in $P_\text{curr}$, but since the AMRW-algorithm works with the subpolygon
$P'$ that is cut off by the segments $vr_1$ and $vr_2$, we only need to
deal with the vertices of $P_\text{curr}$ and $v$, see Figure~\ref{fig:standard}.

\begin{figure}[htb]
 \centerline{\includegraphics[scale=1]{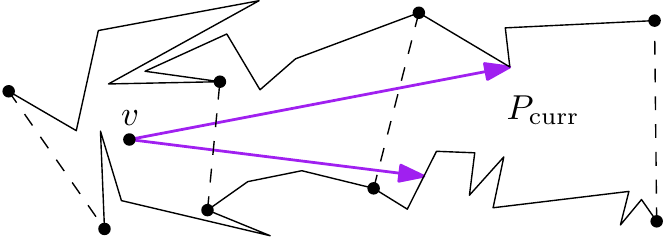}}
\caption{The standard situation.}
 \label{fig:standard}
\end{figure}

We apply the AMRW-strategy almost without change:
we shoot a ray $R'$ that partitions $P_\text{curr}$
into two. If $R'$ does not hit the exit of $P_\text{curr}$, we can use the same
rules as in AMRW to update the triple $(v, r_1, r_2)$ (we take the location of the
exit as a proxy for the target point $q$).

The only problem arises when $R'$ hits the exit.
In this case we must first complete the ray shooting operation:
we extend $R'$ into the adjacent subpolygon. If it again hits the
exit chord of this subpolygon, we continue into the third subpolygon, and so on,
until $R'$ hits a point $p'$ in the boundary of $P$ (or a cut edge that is not
traversed by $\pi_{pq}$). The ray $R'$ splits the wedge defined by $(v, r_1, r_2)$
into two parts, and we take the part that contains the target.
The running time is $O(n/s)$ times the number of subpolygons that are visited.
Then we switch to the \emph{long-jump situation}.
\paragraph{Long-jump situation.}
In general, the invariant of this situation is as follows:
we  maintain a current start vertex $v$ and two
shortest paths, SP$^+$ from $v$ to a point $p^+$
and SP$^-$ from $v$ to a point $p^-$.
The shortest paths form a \emph{funnel} $F$ with apex $v$,
and we maintain the invariant that $q$ lies inside the polygon $P' \subseteq P$
that is cut off by $F$, so $\pi_{pq}$ must go into $F$. Notice that all
funnel vertices are reflex, i.e., they have
interior angle larger than $\pi$.
Let $R_0$ denote some ray from $v$ into $F$.
For the exposition, we assume that $F$ extends from
$v$ to the right, $R_0$ goes in the direction of the positive $x$-axis,
SP$^+$ forms the upper boundary, and SP$^-$ forms the lower boundary, see
Figure~\ref{fig:funnel-1}.

In general, $p^+$ and $p^-$ may lie in different subpolygons $P^+$
and $P^-$. If so, we  assume w.l.o.g.
that $P^+$ is more advanced than $P^-$
(this can be determined in constant time, since we know from the initialization
which sub-polygons must be traversed in which order). Then, our first goal is to
 incrementally extend the shortest path SP$^-$ to the
lower endpoint of the entrance of $P^+$ (Procedure \emph{Catch-up}).

If $P^-=P^+$, we shoot a ray $R'$ and extend one of the
SP edges (Procedure \emph{Extend}). We will proceed in different
ways depending on which side of the ray $R'$ the target $q$
lies (recall that $q$ is the exit of $P^+=P^-$).

\begin{figure}[htb]
\centerline{\includegraphics[scale=0.8]{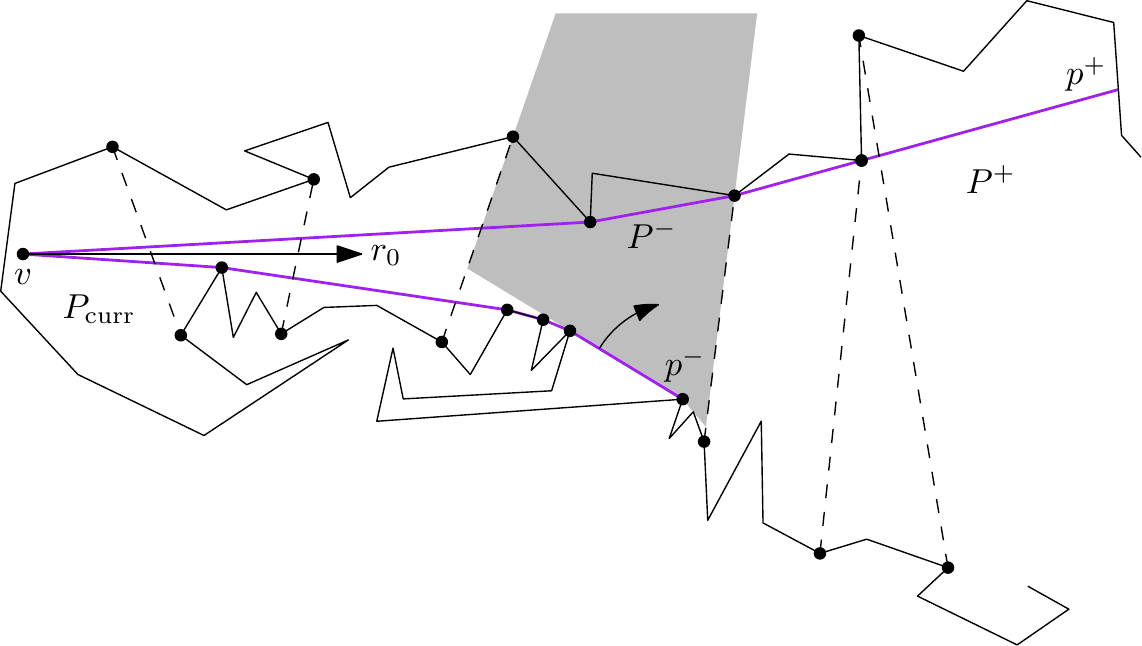}}
\caption{Procedure \emph{Catch-up}: Growing the shorter side of the funnel}
\label{fig:funnel-1}
\end{figure}

\begin{figure}[htb]
\centerline{\includegraphics[scale=0.8]{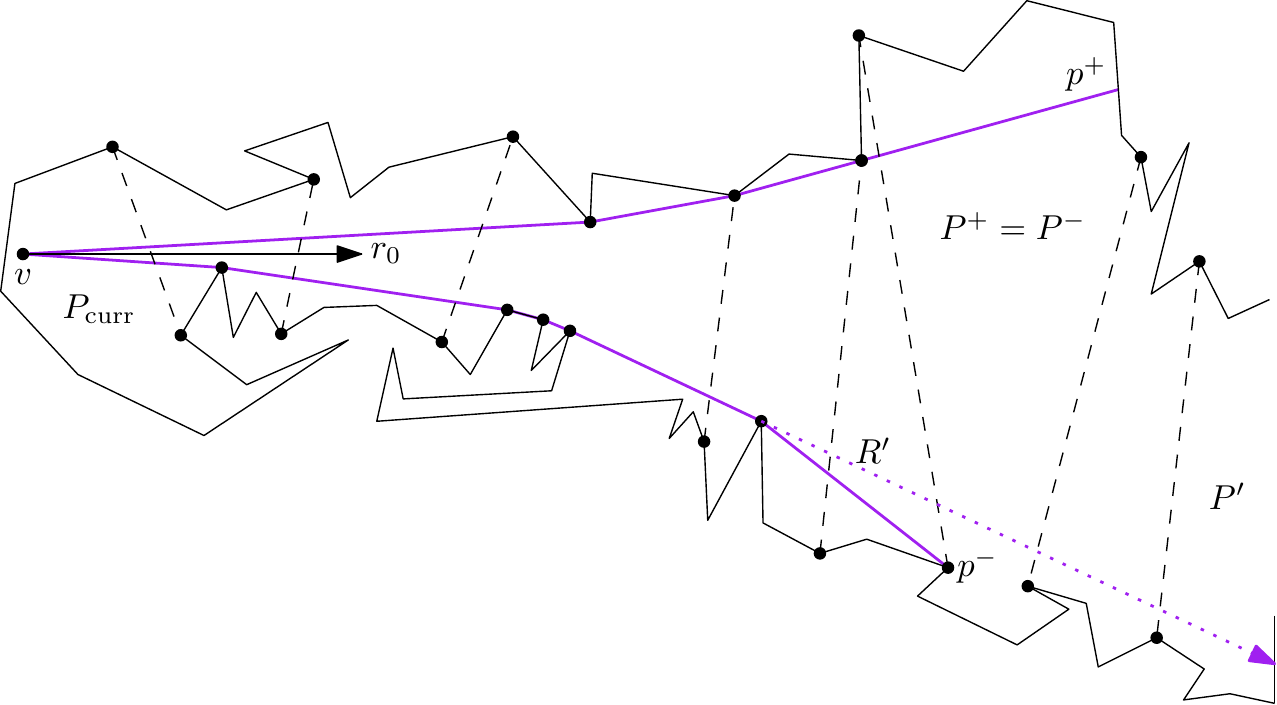}}
\caption{Procedure \emph{Extend}. The ray $R'$ splits the funnel into two funnels.}
\label{funnel-2}
\end{figure}

We have to implement the procedures carefully so that we do not exceed the
storage bounds.  Therefore, we do not store the whole funnel boundaries
SP$^+$ and SP$^-$ explicitly, but we store of each SP only the first edge,
the two last edges, and the $O(s)$ SP edges that cross some cut edge.
We will now give the details of the two procedures.

\noindent
\textbf{Procedure \emph{Catch-up}.}
We have computed the shortest path SP$^-$ to some point
$p^-$ in the subpolygon
$P^-$, and our goal is to proceed towards
the lower endpoint of the exit of $P^-$.
For this, we perform a clockwise sweep of a ray whose initial point is $p^{-}$
and whose original direction follows the last edge of SP$^{-}$, until we
find the first vertex of the lower boundary of $P^{-}$ that is hit,
see Figure~\ref{fig:funnel-1}.

Two cases can happen depending on where the new vertex $w$ lies. If it
makes a right turn with the last edge of SP$^{-}$, we have found
one more edge of the funnel. Thus, we add edge $p^-w$ to the funnel, and
continue. If $w$ makes a left turn, we must remove vertices from $P^{-}$
(so as to satisfy the invariant of the funnel). Those vertices are
removed from $p^-$ towards $v$,
and also, after reaching $v$, from the beginning of SP$^+$ (in the
latter case we output the removed vertices as shortest path vertices).
Finally, we add a new edge to SP$^-$.

Since we do not explicitly store SP$^-$ and SP$^+$, when removing a SP vertex,
we may have to look for the
predecessor (or successor) edges by angular sweeps.
Suppose $e$ is an edge of SP$^-$ whose predecessor we want to determine.
If the predecessor of $e$ crosses the entrance of the current subpolygon,
we can find it in constant time (since it was explicitly stored).
Otherwise, let $z$ be the initial point of $e$.
We perform a counter-clockwise angular sweep of the ray with
start vertex $z$ and whose initial direction is the direction of $e$,
until we hit the first vertex that comes before $z$
on the lower boundary of the relevant subpolygon.  This takes
$O(n/s)$ time for each predecessor or successor that we need to find.
The procedure for finding successor edges is symmetrical.

That is, regardless of where $v$ lies, we can a new edge to the funnel. Once this edge is found, we check where our target lies, and act using the AMRW-strategy. We can repeat this operation until we eventually find the lower end of the exit of $P^{-}$ (for example, after  {\em Catch-up} has been executed in Figure~\ref{fig:funnel-1}). In this case, we advance $P^-$ to the next subpolygon, and we iterate until we reach the entrance of $P^+$.

\noindent
\textbf{Procedure \emph{Extend}.}
We have now two funnel endpoints in the same subpolygon $P^+=P^-$.
If both SP$^+$ and SP$^-$ have only one edge, we can switch back to
the standard situation, letting the triple $(v, r_1, r_2)$ represent the funnel.

Thus, suppose w.l.o.g. that SP$^-$ has more than one edge. We take
the next-to-last edge $e'$ of SP$^-$ and shoot a ray in the direction of
$e'$ into $P^-$, see Figure~\ref{funnel-2}. Call this ray $R'$ and call its origin
vertex $v'$.
If $R'$ does not hit the exit of $P^+$,
the procedure is easy:
we determine, in $O(n/s)$ time, on which side of $R'$ the target lies.
If $q$ lies above $R'$ we pop the last edge of the funnel, as in
Procedure \emph{Catch-up}, and proceed.
If $q$ lies below $R'$ then
we report all vertices along SP$^-$ up to $v'$ (using the same method as
in \emph{Catch-up}).
Then we take the triple that consists of $v'$, the endpoint of $R'$ and
$p^-$, and we switch back to the standard situation (with $P_\text{curr} = P^+$).

Finally, we consider the case that $R'$ goes through the exit,
as shown in Figure~\ref{funnel-2}.
In this case, we extend $R'$ until it hits the boundary of $P$,
in some subpolygon $P'$.
(This takes time
proportional to $n/s$ times the number of subpolygons that are
traversed from $P^+$ to $P'$.)
Next, we determine on which side of $R'$ the target lies.
If $q$ lies above $R'$, $R'$ forms the new last edge of SP$^-$,
$P^-$ is advanced to
$P'$, and we continue with Procedure \emph{Catch-up}.
If the target lies below $R'$, we report all vertices on SP$^+$
up to $v'$, and we take a funnel that consists
only of $R'$ and the last edge of SP$^-$.
We advance $P^+$ to
$P'$ and continue with Procedure \emph{Catch-up}.

\paragraph{Runtime Analysis.}

By the analysis of AMRW, in the standard situation we spend
$O((n/s)^2)$ time 
per subpolygon,
for a total of $O(s(n/s)^2)=O(n^2/s)$.

During the long-jump situation, there are $O(n)$ operations of
adding or removing a vertex of the funnel: each edge is removed at most once,
and even though not all vertices of the funnel are polygon vertices, it still holds
that each edge of $P$ contains at most a constant number of funnel vertices.

Each funnel operation incurs an overhead of  $O(n/s)$ for an
angular sweep or a ray shooting operation.
The only exception occurs when a ray goes through $k+1$ subpolygons,
$k\ge 1$. But in this case the more advanced end of the funnel will
make progress by crossing at least $k$ cut edges. Thus, the running
time $O((k+1) n/s)=O(k\cdot n/s)$ for these cases cannot exceed
$O(s\cdot n/s)=O(n)$ in total.
 We have thus obtained the following theorem.

\begin{theorem}
\label{path-with-space}
  Let $P$ be a simple polygon of $n$ vertices and $s$ be a parameter
  between $1$ and~$n$. We can build a shortest-path-data structure for $P$
  of size $O(s)$ in $O(n^2)$ time and $O(s)$ work-space (or in $O(n)$ time and space).
  With this data structure, we can compute the shortest path
  within $P$ between any two points in $P$ using
  $O(n^2/s)$ time and $O(s)$ work-space.
\qed
\end{theorem}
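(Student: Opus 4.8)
The plan is to assemble the theorem from the three ingredients already developed: the preprocessing step (Theorem~\ref{thm:preprocess}), the constant-work-space shortest-path algorithm of AMRW, and the triangulation machinery of Theorem~\ref{thm:histo_triang}. For the data structure itself, I would invoke Theorem~\ref{thm:preprocess} to obtain $O(s)$ pairwise non-intersecting chords partitioning $P$ into $O(s)$ subpolygons of size $\Theta(n/s)$, in $O(n^2)$ time and constant work-space; storing the subpolygons as boundary-index intervals plus cut edges, together with the adjacency tree, costs $O(s)$ words. The alternative $O(n)$-time, $O(n)$-space bound comes from the Guibas--Hershberger pruning sketched in the remark after Theorem~\ref{thm:preprocess}. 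This half is essentially bookkeeping; the substance is the query bound.

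For the query, I would first handle the easy cases: locate $P_p$ and $P_q$ by upward ray shooting ($O(n)$ time) and index comparison against the $O(s)$ boundary pieces ($O(s)$ time); if $P_p=P_q$, run AMRW inside that single subpolygon in $O((n/s)^2)$ time and stop. Otherwise, read off from the adjacency tree the ordered sequence of cut edges that $\pi_{pq}$ must cross (entrances and exits of the traversed subpolygons), treating all other cut edges as obstacles. Then I would run the augmented AMRW process: maintain the triple $(v,r_1,r_2)$ in the \emph{standard situation} as long as the wedge and the shot ray stay inside one subpolygon $P_{\mathrm{curr}}$, using the exit cut edge as a proxy for $q$; and switch to the \emph{long-jump situation} — maintaining a funnel $F$ with apex $v$ and boundaries $\mathrm{SP}^+$, $\mathrm{SP}^-$ stored only at the first edge, last two edges, and the $O(s)$ cut-crossing edges — whenever a ray pierces an exit or the two funnel ends lie in different subpolygons, using Procedures \emph{Catch-up} and \emph{Extend} to restore the standard situation.

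The runtime argument is the crux and I would split it exactly as the text does. In the standard situation, the AMRW analysis charges each ray-shot to a polygon vertex at most twice, and each shot costs $O(n/s)$ since only the $\Theta(n/s)$ vertices of $P_{\mathrm{curr}}$ (plus $v$) matter; summing the AMRW bound $O((n/s)^2)$ over the $O(s)$ subpolygons gives $O(n^2/s)$. In the long-jump situation I would argue that there are only $O(n)$ funnel insert/delete events — each edge of $P$ carries only $O(1)$ funnel vertices, and each funnel edge is removed at most once — and each event incurs an $O(n/s)$ overhead for one angular sweep or ray-shot (predecessor/successor edges not crossing a cut edge are refound by a sweep confined to one subpolygon), for $O(n^2/s)$; the one dangerous case, a ray threading $k+1 \ge 2$ subpolygons at cost $O(k n/s)$, is paid for by the $\ge k$ cut-edge crossings the advancing funnel end thereby achieves, and since there are only $O(s)$ cut edges this contributes $O(n)$ in total. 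Adding the three pieces yields $O(n^2/s)$ query time, with $O(s)$ work-space since only the triple, the $O(1)$ stored funnel edges per side, the $O(s)$ cut-crossing edges, and the traversal path are kept. I expect the main obstacle to be making the long-jump invariant airtight — precisely, showing that after every \emph{Catch-up}/\emph{Extend} step the funnel still cuts off a subpolygon containing $q$ and that reflexivity of funnel vertices is preserved — and verifying that the ``report removed vertices as shortest-path vertices'' rule emits each vertex of $\pi_{pq}$ exactly once; the charging scheme for the multi-subpolygon rays is delicate but, as argued, reduces cleanly to the $O(s)$ bound on cut edges.
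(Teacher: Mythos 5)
Your proposal follows the paper's own argument essentially verbatim: the same decomposition via Theorem~\ref{thm:preprocess}, the same augmented AMRW query with standard and long-jump situations, funnel maintenance via \emph{Catch-up} and \emph{Extend}, and the same three-part runtime analysis including the charging of multi-subpolygon rays against cut-edge crossings. It is correct and matches the paper's proof in both structure and detail.
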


\section{Open Problems}

Obvious topics for future research are improvements of
the results. For example, it would be interesting obtain a time-space trade-off
for triangulating a \pslg. Furthermore, Theorem~\ref{thm:preprocess} describes
how to find a good cut edge for a simple polygon by essentially triangulating
the whole polygon and giving the most balanced cut. A natural question is if we
can obtain a balanced cut in subquadratic time. The size of the smaller part should
be at least a constant fraction of the whole, but it need not be $1/3$ as in
Proposition~\ref{prop:cut}.  Moreover, the
cut would not necessarily have to be a diagonal connecting two
vertices; any straight (for example, vertical) segment partitioning
the polygon would be fine.

Can the work-space for the last result on finding a shortest path
between $p$ and $q$ (Theorem~\ref{path-with-space}) be reduced to less
than $O(s)$, maybe even constant?  There are two components of our
query algorithm that need $O(s)$ space:
(i)~after locating the subpolygons of $p$ and $q$ in $O(n)$ time, we
 identify and store the sequence of subpolygons traversed by
the path, i.e., we find the path between two vertices in the
tree of subpolygons.
(ii)~we have to process and update the funnels.

There is a chance to reduce the complexity of part~(i): by the
techniques of \cite[Theorem~2]{AsanoMuWa11}, one can walk through the
sequence of subpolygons from $p$ to $q$ in the right order in constant
space and $O(s)$ time.

\bibliographystyle{abbrv}
\bibliography{refs}
\end{document}